     \def\section{\@startsection{section}{1}%
     \z@{.7\linespacing\@plus\linespacing}{.5\linespacing}%
     {\bfseries
     \centering
     }}
     \def\@secnumfont{\bfseries}
\newtheorem{theorem}{Theorem}[section]
\newtheorem{proposition}[theorem]{Proposition}
\theoremstyle{definition}
\newtheorem{definition}[theorem]{Definition}
\theoremstyle{remark}
\newtheorem{remark}[theorem]{Remark}
\numberwithin{equation}{section}
\begin{document}

\title[Modelling Illiquid Stocks]{Modelling Illiquid Stocks Using Quantum Stochastic Calculus}

\author{Will Hicks}
\address{Will Hicks: Memorial University of Newfoundland, St. John's, NL A1C 5S7, Canada}
\email{williamh@mun.ca}

\subjclass[2010] {Primary 81S25; Secondary 91G20}

\keywords{Quantum Stochastic Calculus, Quantum Black-Scholes}

\begin{abstract}
Quantum Stochastic Calculus can be used as a means by which randomness can be introduced to observables acting on a Hilbert space. In this article we show how the mechanisms of Quantum Stochastic Calculus can be used to extend the classical Black-Scholes framework by incorporating a breakdown in the liquidity of a traded asset. This is captured via the widening of the bid offer spread, and the impact on the nature of the resulting probability distribution is modelled in this work.
\end{abstract}

\maketitle


\section{Introduction}
In this paper we show how to apply the quantum stochastic calculus developed by Hudson \& Parthasarathy in \cite{HP}, to the modelling of illiquid stocks \& the bid offer spread.

Quantum stochastic calculus was first applied to the problem of derivative pricing, and Mathematical Finance, by Accardi \& Boukas in \cite{AB}, where the authors derive a general form for a Quantum Black-Scholes equation. In \cite{Hicks}-\cite{Hicks3}, properties of different example models that can be derived using the Quantum Black-Scholes approach are also investigated.

In the Accardi-Boukas approach, a Hamiltonian controls the drift in the system, and the examples investigated in \cite{AB}, as well as in \cite{Hicks} - \cite{Hicks3}, set the drift based on the classical notion of the self financing hedged portfolio, and the risk free rate. Solutions to the Quantum Black-Scholes models are found by solving a partial differential equation, so that the underlying quantum state, and non-commutative nature of the underlying system, do not impact the result.

The first objective of this research, presented in section \ref{impact of quantum state}, is to investigate the question of what impact the quantum state can have on the modelling outcome. In \cite{Hicks4}, the impact of the system Hamiltonian in a closed quantum system, with no external sources of randomness is investigated. In this article, we consider situations where the nature of the quantum state impacts the stochastic evolution of the traded financial assets under consideration.

As part of this objective, we seek to define different notions of arbitrage in a quantum system, and show how these can be used to derive a derivative pricing equation. The question of the nature of quantum arbitrage has also been addressed in \cite{BR}, where the authors develop a Theorem of asset pricing based on equivalence classes of density operators. In this article we use a similar approach to defining arbitrage, derived by extending the classical definition given in \cite{Bjork}.

The second objective of this research, presented in section \ref{QC_chapter_BO}, is to apply the methods of the quantum stochastic calculus developed in \cite{HP}, to building a quantum stochastic model for the traded asset price that incorporates a bid-offer spread. The approach seeks to illustrate how the Hilbert space framework can be used to develop a model whereby the level of market liquidity, and width of the bid-offer spread, impacts the nature of the dynamics, even in the event that one is able to hedge at the market mid-price.
\section{Outline of the Quantum Approach to Option Pricing:}
\subsection{Quantum Stochastic Processes:}
We model traded financial securities, as operators on a system Hilbert space: $\mathcal{H}$, and assume that the expected price for a particular tradeable asset, represented by the operator: $X$, in a market which we represent by: $|\psi\rangle\in\mathcal{H}$ is given by:
\begin{align*}
E^{\psi}[X]&=\langle\psi|X|\psi\rangle.
\end{align*}
Following the approach outlined in \cite{AB} (see also \cite{HP}), we take the tensor product of $\mathcal{H}$ with the symmetric Fock space: $\mathcal{H}\otimes\Gamma(L^2(\mathbb{R}^+;\mathbb{C}))$, and use a unitary time evolution operator to build the price operator at $t=T$.

If the price operator at $t=0$ is written: $X\otimes\mathbb{I}$, then the operator at $t=T$ is given by: $j_T(X)=U_T^*(X\otimes\mathbb{I})U_T$. $U_t$ is defined by the process (see \cite{HP} proposition 7.1):
\begin{align}\label{U_QSP}
dU_t=-\bigg(\Big(iH+\frac{L^*L}{2}\Big)\otimes dt+L^*S\otimes dA_t-L\otimes dA^{\dagger}_t+(\mathbb{I}-S)\otimes d\Lambda_t\bigg)U_t
\end{align}
Whereby $H,L$, and $S$ act on $\mathcal{H}$, and $dA_t,dA^{\dagger}_t$, and $d\Lambda_t$ act on the Fock space. By writing out (see \cite{HP} Theorem 4.5):\begin{align*}
dj_t(X) &=d(U_t^*(X\otimes\mathbb{I})U_t)\\
&=dU_t^*(X\otimes\mathbb{I})U_t+U_t^*(X\otimes\mathbb{I})dU_t+dU_t^*(X\otimes\mathbb{I})dU_t
\end{align*}
and using It{\^ o} multiplication: Table \ref{ito_table} (see \cite{HP}), we can define a stochastic process for $dj_t(X)$, and $dj_t(X_t)^k, k\geq 2$:
\begin{table}
\centering
\begin{tabular}{p{1cm}|p{1cm}|p{1cm}|p{1cm}|p{1cm}}
-&$dA^{\dagger}_t$&$d\Lambda_t$&$dA_t$&$dt$\\
\hline
$dA^{\dagger}_t$&0&0&0&0\\
$d\Lambda_t$&$dA^{\dagger}_t$&$d\Lambda_t$&0&0\\
$dA_t$&$dt$&$dA_t$&0&0\\
$dt$&0&0&0&0\\
\end{tabular}
\caption{Ito multiplication operators for the basic operators of quantum stochastic calculus.}\label{ito_table}
\end{table}
\begin{align}\label{dX}
dj_t(X)&=j_t(\alpha^{\dagger})dA^{\dagger}_t+j_t(\alpha)dA_t+j_t(\lambda)d\Lambda_t+j_t(\theta)dt\\
k\geq 2:dj_t(X)^k &=j_t(\lambda^{k-1}\alpha^{\dagger}) dA^{\dagger}_t+j_t(\alpha\lambda^{k-1}) dA_t+j_t(\lambda^k) d\Lambda_t+j_t(\alpha\lambda^{k-2}\alpha^{\dagger}) dt\nonumber\\
\theta &=i[H,X]-\frac{1}{2}\Big(L^*LX+XL^*L-2L^*XL\Big)\nonumber\\
\alpha &=[L^*,X]S\nonumber\\
\alpha^{\dagger} &=S^*[X,L]\nonumber\\
\lambda &=S^*XS-X\nonumber
\end{align}
\subsection{Defining the Derivative Price Process}
We define the derivative price process as a self-adjoint operator valued function:
\begin{align*}
V:\mathcal{L}(\mathcal{H}\otimes\Gamma)\times\mathbb{R}^+\rightarrow \mathcal{L}(\mathcal{H}\otimes\Gamma)
\end{align*}
Where $\mathcal{H}$ denotes the system Hilbert space, and $\Gamma$ our choice for the symmetric Fock space. $\mathcal{L}(\mathcal{H})$ represents the space of linear operators on the Hilbert space $\mathcal{H}$.

At each time $t$, the operator $V(j_t(X),t)$ acts on the market state, returning real eigenvalues that represent possible values for the derivative price.

Classically, one proceeds on the basis that any derivative payout can be replicated using a self-financing trading strategy. After the initial investment, no further money needs be invested to replicate the payout. One simply buys \& sells the risky underlying at zero cost at the prevailing market price, and ends up with the same outcome as if one had purchased the derivative. This provides the financial rationale for why the discounted price of the derivative should be a Martingale. Essentially, since one can re-create the payout at zero cost, the expected return on the initial investment should be zero after discounting.

In the quantum case, it is not clear that one can replicate derivative payouts in the same way. In a given market state, both the quantum version of the traded underlying, and the derivative, have uncertain prices. Therefore, even if there exists a formula that outputs the required position (to replicate) based on the current price, since one doesn't know the current price, this is not sufficient.

However, for the time being, we work on the basis that the Martingale price process represents a fair price, before offering a partial justification for this assumption in proposition \ref{arb_just}.
\begin{definition}\label{Mart_Price_Process}
A Martingale Price Process, under the normalised vector $|\psi\rangle\in\mathcal{H}$, is a self-adjoint operator valued map:
\begin{align*}
V:\mathcal{L}(\mathcal{H}\otimes\Gamma)\times\mathbb{R}^+\rightarrow\mathcal{L}(\mathcal{H}\otimes\Gamma)
\end{align*}
Such that:
\begin{align*}
E^{(\psi,\varepsilon)}[V(j_t(X),t)] &=\langle\psi\otimes\varepsilon|V(j_t(X),t)|\psi\otimes\varepsilon\rangle\\
&=E^{(\psi,\varepsilon)}[V(X_0,0)]\\
&=V_0\\
\psi&\in\mathcal{H}\\
\varepsilon&\in\Gamma
\end{align*}
\end{definition}
\begin{remark}
Note that in some cases we take expectations over the initial space: $\mathcal{H}$ say. In these cases, we write:
\begin{align*}
E^{\psi}[\dots]&=\langle\psi|\dots|\psi\rangle
\end{align*}
After a stochastic process has been introduced, we take expectations over the tensor product with the symmetric Fock space: $\mathcal{H}\otimes\Gamma$. In this case we write:
\begin{align*}
E^{(\psi,\varepsilon)}[\dots]&=\langle\psi\otimes\varepsilon|\dots|\psi\otimes\varepsilon\rangle
\end{align*}
\end{remark}
For a fixed time, we can define $V(j_t(X),t)$ using the spectral theorem for self-adjoint operators: \cite{Hall} Theorem 10.4, and the associated functional calculus: \cite{Hall} Definition 10.5. In other words, we write:
\begin{align*}
V_t(j_t(X),t) &=f(j_t(X))\\
\end{align*}
By the Spectral Theorem, there exists a unique projection valued measure: $\mu^X$, such that:
\begin{align}\label{SpectralTheory}
j_t(X)&=\int_{\mathbb{R}}\lambda d\mu^X(\lambda)\nonumber\\
f(j_t(X))&=\int_{\mathbb{R}}f(\lambda) d\mu^X(\lambda)
\end{align}
Where we have assumed the spectrum for the real valued, and unbounded, operator $j_t(X)$ is $\mathbb{R}$. Importantly, we can define a derivative: $\frac{\partial V}{\partial j_t(X)}$ using \ref{SpectralTheory}:
\begin{align*}
\frac{\partial V}{\partial j_t(X)}&=\int_{\mathbb{R}}f'(\lambda)d\mu^X(\lambda)
\end{align*}
Furthermore, assuming sufficient smoothness in the function $V$, we define:
\begin{align*}
\frac{\partial V}{\partial t}&=\lim_{dt\rightarrow 0}\frac{V(j_t(X),t+dt)-V(j_t(X),t)}{dt}
\end{align*}
\subsection{Arbitrage in the Quantum Framework:}\label{arb_section}
Classically, we can define an arbitrage as a derivative price: $V(S,t)$ and a probability measure $P$, such that (see \cite{Bjork} definition 7.5):
\begin{align*}
V(S,0)&=V_0\\
P(V(S,T)>V_0)&>0\\
P(V(S,T)\geq V_0)&=1
\end{align*}
The financial intuition behind this being that at $t=0$ we can execute a trade for free that has the possibility of generating financial gain without the possibility of generating losses.

In the quantum case, under definition \ref{Mart_Price_Process}, $V(j_t(X),t)$ is a self-adjoint operator acting on: $\mathcal{H}\otimes\Gamma$. In order to define what we mean by arbitrage in a quantum model, we first define the following operators, where $f(\lambda)$ is defined by the Spectral Theorem, equation \ref{SpectralTheory}:
\begin{align*}
V_{>0}&=\int_{\mathbb{R}}1_{f(\lambda)>0}(\lambda)d\mu^X(\lambda)\\
V_{\geq 0}&=\int_{\mathbb{R}}1_{f(\lambda)\geq 0}(\lambda)d\mu^X(\lambda)
\end{align*}
We can now define a quantum arbitrage:
\begin{definition}\label{quantum_arb}
Let $V(j_t(X),t)$ be defined by:
\begin{align*}
j_t(X)&=\int_{\mathbb{R}}\lambda d\mu^X(\lambda)\nonumber\\
V(j_t(X),t)&=\int_{\mathbb{R}}f(\lambda) d\mu^X(\lambda)
\end{align*}
For some function $f(\lambda)$. Furthermore, let $|\psi\rangle$ be a normalised vector in the Hilbert space: $\mathcal{H}$. Then $V(j_t(X),t)$ is a quantum arbitrage under $|\psi\rangle$ if:
\begin{align*}
E^{(\psi,\varepsilon)}[V_{>0}]&>0\\
E^{(\psi,\varepsilon)}[V_{\geq 0}]&=1
\end{align*}
\end{definition}
We can now give a definition for non-arbitrage price processes in the quantum framework:
\begin{definition}\label{weak_non_arb}
A weak non-arbitrage price process is a self-adjoint operator valued map:
\begin{align*}
V:\mathcal{L}(\mathcal{H}\otimes\Gamma)\times\mathbb{R}^+\rightarrow\mathcal{L}(\mathcal{H}\otimes\Gamma)
\end{align*}
together with a quantum state represented by the vector, $|\psi\rangle\in\mathcal{H}$, such that there  is no value of $t$, whereby $V(j_t(X),t)$ is a quantum arbitrage under $|\psi\rangle$.
\end{definition}
\begin{definition}\label{strong_non_arb}
A strong non-arbitrage price process is a self-adjoint operator valued map:
\begin{align*}
V:\mathcal{L}(\mathcal{H}\otimes\Gamma)\times\mathbb{R}^+\rightarrow\mathcal{L}(\mathcal{H}\otimes\Gamma)
\end{align*}
Such that there is no normalised vector $|\psi\rangle$, such that $V(j_t(X),t)$ is a weak non-arbitrage price process under $|\psi\rangle$.
\end{definition}

\subsection{Deriving a Partial Differential Equation for V}
\begin{proposition}\label{main_QBS}
Let $V(j_t(X),t)$ represent a Martingale Price Process for a derivative payout. Let $U_t$ be the general time evolution operator defined by \ref{U_QSP} with $X$ defined by equation \ref{dX}. Then we have:
\begin{align}\label{expectation_pde}
E^{(\psi,\varepsilon)}\bigg[\frac{\partial V}{\partial t}+\frac{\partial V}{\partial x}j_t(\theta)+\frac{\partial^2 V}{\partial x^2}\frac{j_t(\alpha\alpha^{\dagger})}{2}\bigg] &=0
\end{align}
Where $\alpha$, and $\theta$, are as in equation \ref{dX}.
\end{proposition}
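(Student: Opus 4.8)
\emph{Approach.} The plan is to exploit the defining property of a Martingale Price Process (Definition \ref{Mart_Price_Process}): since $E^{(\psi,\varepsilon)}[V(j_t(X),t)]$ is independent of $t$, one has $E^{(\psi,\varepsilon)}\big[dV\big]=0$, where $dV:=d\big(V(j_t(X),t)\big)=V(j_{t+dt}(X),t+dt)-V(j_t(X),t)$. The proof then reduces to computing this increment and extracting its expectation. First I would expand $dV$ as a Taylor series in its two arguments, separating the explicit time dependence from the dependence through the operator $j_t(X)$:
\begin{align*}
dV&=\frac{\partial V}{\partial t}\,dt+\sum_{k\geq 1}\frac{1}{k!}\frac{\partial^k V}{\partial x^k}\,\big(dj_t(X)\big)^k,
\end{align*}
the operator derivatives $\partial^k V/\partial x^k$ being understood through the functional calculus of equation \ref{SpectralTheory}. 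This is the step where the non-commutativity of $j_t(X)$ and its increment must be handled carefully, and where the quantum It\^o product rule of \cite{HP} (Theorem 4.5) underpins writing each $\big(dj_t(X)\big)^k$ as a single adapted differential.

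Next I would substitute the explicit expressions for $dj_t(X)$ and $\big(dj_t(X)\big)^k$ furnished by equation \ref{dX}. Each is a linear combination of the fundamental differentials $dA_t^\dagger$, $dA_t$, $d\Lambda_t$ and $dt$, with operator coefficients of the form $j_t(\cdot)$. Taking the expectation $E^{(\psi,\varepsilon)}[\cdot]=\langle\psi\otimes\varepsilon|\cdot|\psi\otimes\varepsilon\rangle$ against the vacuum vector $\varepsilon$, the creation, annihilation and gauge differentials are mean-zero (they annihilate, or create a component orthogonal to, the vacuum), so only the $dt$-coefficients contribute. Collecting them gives
\begin{align*}
E^{(\psi,\varepsilon)}\big[dV\big]&=E^{(\psi,\varepsilon)}\bigg[\frac{\partial V}{\partial t}+\frac{\partial V}{\partial x}\,j_t(\theta)+\sum_{k\geq 2}\frac{1}{k!}\frac{\partial^k V}{\partial x^k}\,j_t\big(\alpha\lambda^{k-2}\alpha^\dagger\big)\bigg]\,dt.
\end{align*}

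\emph{Final step and main obstacle.} It then remains to show that the series collapses to its leading term. The $k=2$ contribution is precisely $\tfrac12\,(\partial^2 V/\partial x^2)\,j_t(\alpha\alpha^\dagger)$, whereas the terms with $k\geq 3$ carry the gauge factor $\lambda^{k-2}$ with $\lambda=S^*XS-X$; these vanish once the scattering/number coefficient is trivial (in particular when $S=\mathbb{I}$, so that $\lambda=0$), which is the regime of the proposition. Dividing by $dt$ and passing to the limit $dt\to0$, while invoking the constancy in $t$ of $E^{(\psi,\varepsilon)}[V(j_t(X),t)]$, then yields \ref{expectation_pde}. I expect the main obstacle to be exactly this truncation at second order: unlike the classical It\^o calculus, where $(dX)^k=0$ for $k\geq 3$, here every power $\big(dj_t(X)\big)^k$ possesses a non-zero $dt$-part, so the reduction rests on the vanishing of the higher gauge powers $\lambda^{k-2}$ rather than on an automatic degree count, together with the convergence and interchange-of-limit justifications needed to make the formal Taylor expansion rigorous.
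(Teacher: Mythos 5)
Your proposal is correct and follows essentially the same route as the paper: use the Martingale property to force $E^{(\psi,\varepsilon)}[dV]=0$, Taylor-expand $dV$ in $dt$ and $dj_t(X)$, substitute the quantum It\^o differentials of equation \ref{dX}, and observe that after taking expectations only the $dt$-coefficients survive. The genuine difference lies at the truncation step, where you are more careful than the paper. The paper's proof writes the expansion \ref{dV_exp} already truncated: it contains no $d\Lambda_t$ term and no derivatives of order higher than two, which is legitimate only when $\lambda=S^*XS-X=0$, i.e.\ essentially when $S=\mathbb{I}$. You instead retain the full $dt$-series $\sum_{k\geq 2}\frac{1}{k!}\frac{\partial^k V}{\partial x^k}\,j_t(\alpha\lambda^{k-2}\alpha^{\dagger})$ and point out that its collapse to the $k=2$ term rests on the vanishing of the gauge powers $\lambda^{k-2}$, not on any automatic degree count as in classical It\^o calculus. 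This extra care exposes a real weakness in the proposition as stated: it claims to hold for the \emph{general} evolution \ref{U_QSP}, yet for $\lambda\neq 0$ equation \ref{expectation_pde} is false --- the paper's own Section \ref{QC_chapter_BO}, with $S=R(\pi/2)$, produces the infinite-order equation \ref{NG_QBS} rather than \ref{expectation_pde}, so your reading of the result as implicitly restricted to the $\lambda=0$ regime is the only way to make it true. You also make explicit that $\varepsilon$ must be (proportional to) the vacuum for the $dA_t$, $dA^{\dagger}_t$ and $d\Lambda_t$ expectations to vanish; for a general exponential vector $e(u)$ with $u\neq 0$ these differentials have nonvanishing expectation proportional to $dt$ and would contribute extra terms. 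Both hypotheses are used silently in the paper's proof, so your version, while following the same strategy, is the more defensible one.
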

\begin{proof}
We write out $dV$ as a power series expansion. We denote the arguments of the function $V$ as $x$ and $t$:
\begin{align}\label{dV_powerseries}
dV &= V(j_t(X)+dj_t(X),t+dt)-V(j_t(X),t)\nonumber\\
 &= \sum_{n,k}\frac{1}{n!k!}\frac{\partial^{(n+k)}V}{\partial x^n\partial t^k}(dj_t(X)^n)(dt^k)
\end{align}
Expanding $dV$ using equation \ref{dX}, we have:
\begin{align}\label{dV_exp}
dV &=\bigg(\frac{\partial V}{\partial t}+\frac{\partial V}{\partial x}j_t(\theta)+\frac{\partial^2 V}{\partial x^2}\frac{j_t(\alpha\alpha^{\dagger})}{2}\bigg)dt\nonumber\\
&+\bigg(\frac{\partial V}{\partial x}j_t(\alpha^{\dagger})\bigg)dA^{\dagger}_t\nonumber\\
&+\bigg(\frac{\partial V}{\partial x}j_t(\alpha)\bigg)dA_t
\end{align}
\newline
After taking expectations and equating to zero, we are left with the $dt$ terms. Equating these to zero leaves:
\begin{align*}
E^{(\psi,\varepsilon)}\bigg[\frac{\partial V}{\partial t}+\frac{\partial V}{\partial x}j_t(\theta)+\frac{\partial^2 V}{\partial x^2}\frac{j_t(\alpha\alpha^{\dagger})}{2}\bigg] &=0
\end{align*}
\end{proof}
\begin{proposition}\label{classical_QBS}
Let the system space $\mathcal{H}=L^2(\mathbb{R})$, and $X|\psi(x)\rangle =|x\psi(x)\rangle$, for $|\psi(x)\rangle\in L^2(\mathbb{R})$. We set $L=-i\sigma\frac{\partial}{\partial x}$, and $S=\mathbb{I}$. Finally, we set: $H=i(\sigma^2/2)\frac{\partial}{\partial x}$. Then equation \ref{expectation_pde} becomes:
\begin{align}\label{Classical_pde}
E^{(\psi,\varepsilon)}\bigg[\frac{\partial V}{\partial t}-\frac{\sigma^2}{2}\frac{\partial V}{\partial x}+\frac{\sigma^2}{2}\frac{\partial^2 V}{\partial x^2}\bigg] &=0
\end{align}
\end{proposition}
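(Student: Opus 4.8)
The plan is to substitute the specified operators $H$, $L$, $S$ directly into the definitions of $\theta$, $\alpha$, and $\alpha^{\dagger}$ from equation \ref{dX} and to show that each reduces to a scalar multiple of the identity, after which the general equation \ref{expectation_pde} collapses to \ref{Classical_pde}. Since $S=\mathbb{I}$ we immediately get $\lambda=S^*XS-X=0$, so the non-commutative $d\Lambda_t$ terms play no role, and the formulas simplify to $\alpha=[L^*,X]$ and $\alpha^{\dagger}=[X,L]$.

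First I would record the adjoint of $L$. Because $-i\partial_x$ is the (self-adjoint) momentum operator and $\sigma$ is real, we have $L^*=L=-i\sigma\partial_x$. Next I would compute the single commutator $[X,L]$ acting on a test function $\psi\in L^2(\mathbb{R})$: applying the Leibniz rule to $L(x\psi)=-i\sigma(\psi+x\psi')$ gives $[X,L]\psi=i\sigma\psi$, i.e. $[X,L]=i\sigma\mathbb{I}$. Hence $\alpha^{\dagger}=i\sigma$, $\alpha=-i\sigma$, and $\alpha\alpha^{\dagger}=\sigma^2$, which already produces the coefficient $\tfrac{\sigma^2}{2}$ in front of $\tfrac{\partial^2 V}{\partial x^2}$. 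The same Leibniz computation applied to $H=i(\sigma^2/2)\partial_x$ gives $[H,X]=i(\sigma^2/2)\mathbb{I}$, so that $i[H,X]=-\sigma^2/2$.

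The one step requiring a little care is the dissipative part of $\theta$. With $L^*=L$ I would rewrite $L^*LX+XL^*L-2L^*XL=[L,[L,X]]$, and since $[L,X]=-i\sigma\mathbb{I}$ is a multiple of the identity, the outer commutator vanishes. Therefore $\theta=i[H,X]=-\sigma^2/2$. Finally, because both $\theta$ and $\alpha\alpha^{\dagger}$ are now scalar multiples of $\mathbb{I}$ and $U_t$ is unitary, the map $j_t$ acts trivially on them: $j_t(\theta)=-\sigma^2/2$ and $j_t(\alpha\alpha^{\dagger})=\sigma^2$. Substituting these into \ref{expectation_pde} yields \ref{Classical_pde} directly.

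I expect the only real obstacle to be bookkeeping rather than conceptual: getting the adjoint $L^*$ right (the sign flip from integration by parts combined with complex conjugation of $-i\sigma$), and confirming that these manipulations of unbounded operators are legitimate on a suitable common dense domain such as the Schwartz functions, on which $X$, $L$, and $H$ are all defined. Once $[L,X]$ is recognised as central — a scalar commuting with everything — the vanishing of the dissipative term and the triviality of $j_t$ on $\theta$ and $\alpha\alpha^{\dagger}$ follow immediately, and the remainder is routine substitution.
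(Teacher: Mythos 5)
Your proof is correct and follows essentially the same route as the paper: compute $\alpha=[L^*,X]=-i\sigma$, show the dissipative term $L^*LX+XL^*L-2L^*XL$ vanishes, conclude $\theta=i[H,X]=-\sigma^2/2$, and substitute into equation \ref{expectation_pde}. Your write-up is in fact slightly more complete than the paper's, since the paper merely asserts the vanishing of the dissipative term, whereas you justify it via the identity $L^*LX+XL^*L-2L^*XL=[L,[L,X]]$ with $[L,X]$ central, and you also make explicit that $j_t$ acts trivially on scalar multiples of the identity.
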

\begin{proof}
First note that we have:
\begin{align*}
\alpha&=[L^*,X]\\
&=-i\sigma
\end{align*}
Further note that:
\begin{align*}
XL^*L+L^*LX-2L^*XL&=0
\end{align*}
Therefore:
\begin{align*}
\theta&=i[H,X]\\
&=-\frac{\sigma^2}{2}
\end{align*}
Feeding this into \ref{expectation_pde} gives the required result.
\end{proof}
\begin{proposition}\label{arb_just}
The solution to equation \ref{expectation_pde}, is a weak non-arbitrage price process. Where there is a solution to the underlying PDE: \ref{Classical_pde}, this is a strong non-arbitrage price process.
\end{proposition}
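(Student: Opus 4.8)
The plan is to reduce both claims to the elementary, classical fact that a random payoff which is almost surely non-negative and strictly positive with positive probability cannot have zero mean. The bridge to the quantum setting is the observation that, for a \emph{fixed} normalised vector $|\psi\otimes\varepsilon\rangle$, the map $B\mapsto \nu(B):=\langle\psi\otimes\varepsilon|\mu^X(B)|\psi\otimes\varepsilon\rangle$ is an ordinary scalar probability measure on the Borel sets of $\mathbb{R}$, since $\mu^X$ is a projection valued measure and $\|\psi\otimes\varepsilon\|=1$. Under this measure every expectation of a spectral integral collapses to a classical integral, $E^{(\psi,\varepsilon)}[f(j_t(X))]=\int_{\mathbb{R}}f(\lambda)\,d\nu(\lambda)$, and the arbitrage operators of Definition \ref{quantum_arb} satisfy $E^{(\psi,\varepsilon)}[V_{>0}]=\nu(\{f>0\})$ and $E^{(\psi,\varepsilon)}[V_{\geq 0}]=\nu(\{f\geq 0\})$.

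First I would establish the martingale property in the direction opposite to Proposition \ref{main_QBS}. Starting from the expansion \ref{dV_exp}, take $E^{(\psi,\varepsilon)}[\,\cdot\,]$ and use, as in the proof of Proposition \ref{main_QBS}, that the creation and annihilation increments $dA^\dagger_t,dA_t$ have vanishing expectation in the reference Fock state, so that $\tfrac{d}{dt}E^{(\psi,\varepsilon)}[V(j_t(X),t)]$ equals the expectation of the $dt$-drift. If $V$ solves \ref{expectation_pde} for all $t\in[0,T]$, this derivative is identically zero, and integrating gives $E^{(\psi,\varepsilon)}[V(j_t(X),t)]=V_0$ for every $t$; i.e. $V$ is a Martingale Price Process under $|\psi\rangle$ in the sense of Definition \ref{Mart_Price_Process}.

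For the weak statement I would then argue by contradiction. Suppose $V$ were a quantum arbitrage under $|\psi\rangle$ at some time $t$; reading the conditions against the gain operator $V(j_t(X),t)-V_0\mathbb{I}$, whose spectral symbol is $g(\lambda)=f(\lambda)-V_0$ (matching the classical arbitrage condition $V(S,T)>V_0$ rather than $f>0$ literally), they become $\nu(\{g>0\})>0$ and $\nu(\{g\geq 0\})=1$. These two facts force $\int_{\mathbb{R}}g\,d\nu>0$, hence $E^{(\psi,\varepsilon)}[V(j_t(X),t)]>V_0$, contradicting the martingale identity just established. Therefore no such $t$ exists and $(V,|\psi\rangle)$ is a weak non-arbitrage price process (Definition \ref{weak_non_arb}).

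For the strong statement, note that a solution of the underlying pointwise PDE \ref{Classical_pde}, meaning the integrand set to zero as a function of $x$ and $t$ without the expectation, makes the bracket in \ref{expectation_pde} vanish as an operator identity, so the expectation PDE holds simultaneously for \emph{every} normalised $|\psi\rangle$. Applying the weak argument uniformly over all states then shows that there is no state under which $V$ admits a quantum arbitrage, which is the content of strong non-arbitrage (Definition \ref{strong_non_arb}, whose statement I read as asserting the absence of any arbitraging state rather than its literal phrasing). The main obstacle is analytic rather than conceptual: $j_t(X)$ is unbounded, so I must justify that $V(j_t(X),t)$, its gain, and the drift in \ref{dV_exp} lie in the domain determined by $|\psi\otimes\varepsilon\rangle$, that the Fubini-type interchange of $E^{(\psi,\varepsilon)}$ with the time integral is valid, and that the martingale increments genuinely integrate to zero in the chosen $\varepsilon$ — the precise points on which the reduction to the scalar measure $\nu$ rests.
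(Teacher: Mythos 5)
Your proof is correct, and it shares the paper's basic strategy --- assume a quantum arbitrage and derive a contradiction from the martingale property --- but it executes the decisive step differently, and more rigorously than the paper does. The paper's proof normalises $V_0=0$, takes a partial expectation over $\mathcal{H}$ alone so that the PDE kills the drift in \ref{dV_exp}, then invokes the Martingale hypothesis to assert that the remaining noise terms also vanish, concluding that the reduced process is identically zero and hence $E^{(\psi,\varepsilon)}[V_{>0}]=0$, contradicting the first arbitrage condition. That last inference (zero expectation of $V$ forcing zero expectation of the spectral projection $V_{>0}$) is precisely the point the paper leaves implicit: it is only valid because of the positivity condition $E^{(\psi,\varepsilon)}[V_{\geq 0}]=1$. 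Your reduction to the scalar probability measure $\nu(B)=\langle\psi\otimes\varepsilon|\mu^X(B)|\psi\otimes\varepsilon\rangle$ isolates exactly this fact ($g\geq 0$ $\nu$-a.s.\ together with $\nu(\{g>0\})>0$ implies $\int_{\mathbb{R}} g\,d\nu>0$) and lands the contradiction on the martingale identity rather than on the arbitrage condition; the two are logically dual, but your version closes the gap. You also supply two things the paper silently assumes: the converse direction of Proposition \ref{main_QBS} (that a solution of \ref{expectation_pde} is in fact a Martingale Price Process --- needed because the proposition's hypothesis is only the PDE, while the paper's proof invokes martingality as a separate fact), and the treatment of general $V_0$ via the gain symbol $f-V_0$ instead of the normalisation $V_0=0$. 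Both you and the paper must read Definition \ref{strong_non_arb} charitably (its literal wording asserts that no state makes $V$ weakly non-arbitrage, which is evidently a typo for the absence of any arbitraging state), and both arguments carry the same analytic caveats concerning domains of the unbounded operator $j_t(X)$ and the vanishing of the $dA_t$, $dA^{\dagger}_t$ expectations in the chosen Fock state --- caveats which you, unlike the paper, at least flag explicitly.
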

\begin{proof}
Assume, $V(j_t(X),t)$ is a quantum arbitrage under $|\psi\rangle$, and assume that:
\begin{align*}
E^{(\psi,\varepsilon)}[V(j_0(X),0)]&=0
\end{align*}
From equation \ref{dV_exp}, we have:
\begin{align*}
dV &=\bigg(\frac{\partial V}{\partial t}+\frac{\partial V}{\partial x}j_t(\theta)+\frac{\partial^2 V}{\partial x^2}\frac{j_t(\alpha\alpha^{\dagger})}{2}\bigg)dt\\
&+\bigg(\frac{\partial V}{\partial x}j_t(\alpha^{\dagger})\bigg)dA^{\dagger}_t+\bigg(\frac{\partial V}{\partial x}j_t(\alpha)\bigg)dA_t
\end{align*}
Now take expectations over the Hilbert space: $\mathcal{H}$ to get:
\begin{align}\label{dV_noise_mart}
E^{(\psi,\varepsilon)}[dV] &=E^{\psi}\bigg[\bigg(\frac{\partial V}{\partial t}+\frac{\partial V}{\partial x}j_t(\theta)+\frac{\partial^2 V}{\partial x^2}\frac{j_t(\alpha\alpha^{\dagger})}{2}\bigg)\bigg]dt\nonumber\\
&+E^{\psi}\bigg[\bigg(\frac{\partial V}{\partial x}j_t(\alpha^{\dagger})\bigg)\bigg]dA^{\dagger}_t+E^{\psi}\bigg[\frac{\partial V}{\partial x}\bigg(j_t(\alpha)\bigg)\bigg]dA_t\nonumber\\
&=E^{\psi}\bigg[\bigg(\frac{\partial V}{\partial x}j_t(\alpha^{\dagger})\bigg)\bigg]dA^{\dagger}_t+E^{\psi}\bigg[\bigg(\frac{\partial V}{\partial x}j_t(\alpha)\bigg)\bigg]dA_t
\end{align}
In fact, since $V(j_t(X),t)$ is a Martingale price process, we have $E^{\psi}[dV]=0$. Thus, after taking expectations over the Hilbert space $\mathcal{H}$, $V(j_t(X),t)$ is the zero process, with no stochastic noise. Therefore we have: $E^{\psi,\varepsilon}[V_{>0}]=0$, which contradicts the requirements for a quantum arbitrage.

Finally, if the PDE \ref{Classical_pde} is met, then this conditions holds regardless of the quantum state, and $V(j_t(X),t)$ must therefore be a strong non-arbitrage process.
\end{proof}
\section{Modelling Bid-Offer Spread:}\label{impact of quantum state}
\subsection{Hilbert Space Representation of the Market:}
We consider the market as being made up of a number of buyers who would like to buy at the lower {\em bid} price, and sellers who would like to sell at the higher {\em offer} price.

Furthermore, we consider the case where there are 2 state variables. One coordinate: $x$, that represents the mid-price for the traded asset, and a second coordinate: $\epsilon$ that represents the width of the bid-offer spread.

We therefore assume that the state of the market for potential buyers (and sellers) is determined by wave functions in the Hilbert space of complex valued square integrable functions on $\mathbb{R}^2$:
\begin{align}\label{psi_o,b}
\psi_o(x,\epsilon)\in L^2(\mathbb{R}^2,\mathbb{C})\\
\psi_b(x,\epsilon)\in L^2(\mathbb{R}^2,\mathbb{C})\nonumber
\end{align}
The question now arises as to how to combine the two into a single system space for the market.
\subsection{Tensor Product Space:}
Instead of the coordinates $x$ and $\epsilon$, we could consider 2 variables as the bid price and the offer price. In place of equation \ref{psi_o,b}, we could apply: $\psi_o(x_o)\otimes\psi_b(x_b)\in L^2(\mathbb{R}^2)$. Bounded operators on: $L^2(\mathbb{R})\otimes L^2(\mathbb{R})\cong L^2(\mathbb{R}^2)$ can be defined as the algebraic tensor product: $B(L^2(\mathbb{R}))\otimes B(L^2(\mathbb{R}))$.
\begin{align*}
(A\otimes B)(\psi_o\otimes\psi_b) &=A\psi_o\otimes B\psi_b
\end{align*}
Thus whilst interaction between the two spaces is possible, we have one set of operators that acts on the first space, and a second that acts on the second space.

In fact, unlike wave functions of coordinates in Euclidean space, the bid and offer wave functions act on the same axis, and we would like to be able to define operators that act on both:
\begin{align*}
A\psi_o\text{, }A\psi_b
\end{align*}
Furthermore, whilst the buyers that make up a market may wish to transact at a low price and the sellers that make up a market may wish to transact at a higher price, the 2 may occasionally meet in the middle. In other words, we would like inner products like: $\langle\psi_o|\psi_b\rangle$ to have meaning.

Finally, the question arises as to how we wish the state to be normalised. If we have: $\psi_o\otimes\psi_b$, then the state is normalised via:
\begin{align*}
||\psi_o\otimes\psi_b||^2 &=\langle\psi_o\otimes\psi_b|\psi_o\otimes\psi_b\rangle\\
&= \langle\psi_o|\psi_o\rangle\langle\psi_b|\psi_b\rangle\\
&=||\psi_o||^2||\psi_b||^2\\
&=1
\end{align*}
The question now arises regarding the situation whereby there is a tradeable asset, such as shares in a more or less defunct company, where there are no buyers. In this extreme case, we would still like to be able to define a normalisable market state.

For this reason, we look to representing the Hilbert state using the direct sum, rather than the tensor product.
\subsection{Hilbert Space Direct Sum:}\label{Hilbert_space_direct_sum}
We define:
\begin{align}\label{Market_def}
\psi &=\psi_o\oplus\psi_b\\
\psi_o(x,\epsilon)\text{, }\psi_b(x,\epsilon) &\in L^2(\mathbb{R}^2)\nonumber
\end{align}
For $\phi=\phi_1\oplus\phi_2$ and $\psi=\psi_1\oplus\psi_2$, we have:
\begin{align*}
\langle\phi|\psi\rangle &=\langle\phi_1|\psi_1\rangle+\langle\phi_2|\psi_2\rangle
\end{align*}
So it follows that the normalisation condition becomes:
\begin{align}\label{norm_cond}
||\psi_0\oplus\psi_b||^2&=||\psi_o||^2+||\psi_b||^2\nonumber\\
&=1
\end{align}
For example, we may have an even balance of buyers \& sellers, in which case:
\begin{align*}
||\psi_o||^2=||\psi_b||^2=1/2
\end{align*}
In general, as long as the normalization condition, given by equation \ref{norm_cond}, is met then we can have:
\begin{itemize}
\item More buyers than sellers: $||\psi_b||^2>||\psi_o||^2$.
\item More sellers than buyers: $||\psi_o||^2>||\psi_b||^2$.
\end{itemize}
\begin{remark}
Going forward, we make use of matrix notation, so that for $\psi\in \mathcal{S}(\mathbb{R})\oplus\mathcal{S}(\mathbb{R})$ we write:
\begin{align*}
|\psi\rangle &=\begin{pmatrix}\psi_0\\ \psi_b\end{pmatrix}\\
A\psi&=\begin{pmatrix} A_{11}& A_{12}\\A_{21}&A_{22}\end{pmatrix}\begin{pmatrix}\psi_0\\ \psi_b\end{pmatrix}
\end{align*}
Note that, we also apply the following abuse of notation, by writing:
\begin{align*}
\langle\psi|&=\begin{pmatrix}\psi_o&\psi_b\end{pmatrix}
\end{align*}
So that we write:
\begin{align*}
E^{\psi}[A]&=\langle\psi|A|\psi\rangle\\
&=\begin{pmatrix}\psi_o&\psi_b\end{pmatrix}\begin{pmatrix} A_{11}& A_{12}\\A_{21}&A_{22}\end{pmatrix}\begin{pmatrix}\psi_0\\ \psi_b\end{pmatrix}\\
&=\langle\psi_o|A_{11}|\psi_o\rangle+\langle\psi_o|A_{12}|\psi_b\rangle+\langle\psi_b|(A_{21}|\psi_o\rangle+\langle\psi_b|A_{22}|\psi_b\rangle
\end{align*}
\end{remark}
\subsection{Price Operator:}
We now define the price operator $X$:
\begin{align}\label{X_com}
X=\begin{pmatrix} x+\epsilon/2&0\\0&x-\epsilon/2\end{pmatrix}
\end{align}
Consider the example whereby we have the following evenly balance market of buyers \& sellers:
\begin{align}\label{bivar}
|\psi_o(x,\epsilon)|^2=(1/2)N(x,\epsilon,\mu,\Sigma)\\
|\psi_b(x,\epsilon)|^2=(1/2)N(x,\epsilon,\mu,\Sigma)\nonumber\\
\mu=\begin{pmatrix}x_{mid}\\ \epsilon_0\end{pmatrix}\nonumber\\
\Sigma=\begin{pmatrix}\sigma_x&0\\0&\sigma_{\epsilon}\end{pmatrix}\nonumber
\end{align}
Before introducing a stochastic process, the weighted average price is given by:
\begin{align*}
E^{\psi}[X] &=\int_{\mathbb{R}^2}(x+\epsilon/2)|\psi_o(x,\epsilon)|^2dxd\epsilon+\int_{\mathbb{R}^2}(x-\epsilon/2)|\psi_b(x,\epsilon)|^2dxd\epsilon\\
&=x_{mid}
\end{align*}
We can consider a situation whereby a sudden piece of adverse news causes a rush to sell (even if it means selling at the lower bid price) using a rotation:
\begin{align}\label{bear_theta}
R(\theta) &=\begin{pmatrix} \cos(\theta)&-\sin(\theta)\\\sin(\theta)&\cos(\theta)\end{pmatrix}\\
\begin{pmatrix}\psi_o\\\psi_b\end{pmatrix} &\rightarrow\begin{pmatrix} \cos(\theta)&-\sin(\theta)\\\sin(\theta)&\cos(\theta)\end{pmatrix}\begin{pmatrix}\psi_o\\\psi_b\end{pmatrix}\nonumber
\end{align}
Now if we calculate the weighted average price after the market rotation from buyers to sellers, we find:
\begin{align*}
E^{\psi}[R(\theta)^*XR(\theta)]&=\begin{pmatrix}\psi_o&\psi_b\end{pmatrix}\begin{pmatrix} x+\cos(2\theta)\epsilon/2 & -\sin(2\theta)\epsilon/2\\-\sin(2\theta)\epsilon/2 & x-\cos(2\theta)\epsilon/2\end{pmatrix}\begin{pmatrix} \psi_o\\ \psi_b\end{pmatrix}\\
&=x_{mid}-\sin(2\theta)\epsilon_0/2
\end{align*}
So that a market rotation from buyers to sellers (represented by equation \ref{bear_theta}) causes a fall in the expected trade price.
\subsection{Introducing a Gaussian Stochastic Process:}\label{first}
The functional form for the stochastic process, derived in \cite{AB}, is given by equation \ref{dX}. The first question is what to use for the linear operator: $L$. Some possibilities:
\begin{itemize}
\item A single factor stochastic process, with $\partial_x$ operators only. In this case the model essentially reduces to the simple quantum model above. Whilst $\epsilon$ is a Euclidean coordinate for the system, it is a static one that does not evolve.
\item A 2 factor stochastic process, with separate operators; $L_x$ and $L_{\epsilon}$.
\item A single factor stochastic process, which incorporates volatility in both the $x$ and $\epsilon$ variables. In this case, $L$ must incorporate both $\partial_x$ and $\partial_{\epsilon}$ operators.
\end{itemize}
Whilst there is merit in the second approach (2 factor approach), for the time being we stick with a 1 factor stochastic process, and set:
\begin{align}\label{ext_L}
L &=\begin{pmatrix} -i\sigma_x\partial_x-i\sigma_{\epsilon}\partial_{\epsilon}&0\\0&-i\sigma_x\partial_x-i\sigma_{\epsilon}\partial_{\epsilon}\end{pmatrix}
\end{align}
Furthermore, to keep the approach as general as possible, we set:
\begin{align}\label{ext_S}
R(\theta) &=\begin{pmatrix} \cos(\theta)&-\sin(\theta)\\\sin(\theta)&\cos(\theta)\end{pmatrix}
\end{align}
and incorporate $R(\theta)$ directly into the $X$ operator:
\begin{align}\label{ext_X}
X &=\begin{pmatrix} x+\cos(2\theta)\epsilon/2&-\sin(2\theta)\epsilon/2\\-\sin(2\theta)\epsilon/2&x-\cos(2\theta)\epsilon/2\end{pmatrix}
\end{align}
So that now, the price operator incorporates a market rotation between buyers \& sellers. In this way a small trade, unlikely to significantly effect the best available market price, could be represented by $\theta=0$. A large sell order initiated by a bearish market speculator, that would likely cause an adverse impact on the market price, would be represented by $\theta>0$.
\begin{proposition}\label{ext_var1}
Let the stochastic process: $dj_t(X)$ be defined by equation \ref{dX}, with $L$ given by: \ref{ext_L}, and $X$ given by: \ref{ext_X}. Then, the variance is given by:
\begin{align}
E^{(\psi,\varepsilon)}\big[j_t(X)^2\big]&=\big(\sigma_x^2t+(\sigma_{\epsilon}^2/4)t+\cos(2\theta)\sigma_x\sigma_{\epsilon}t\big)||\psi_o||^2\nonumber\\
&+\big(\sigma_x^2t+(\sigma_{\epsilon}^2/4)t-\cos(2\theta)\sigma_x\sigma_{\epsilon}t\big)||\psi_b||^2\nonumber\\
&+\sin(2\theta)\big(\langle\psi_b|\psi_o\rangle+\langle\psi_o|\psi_b\rangle\big)\sigma_x\sigma_{\epsilon}t
\end{align}
\end{proposition}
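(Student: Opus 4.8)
The plan is to route the whole computation through the quantum It\^o expansion already recorded in equation \ref{dX}, exploiting the fact that the chosen gauge coefficient vanishes. Since the rotation \ref{ext_S} has been absorbed directly into the price operator \ref{ext_X}, we work with $S=\mathbb{I}$, so that $\lambda=S^*XS-X=0$, while $\alpha=[L^*,X]$ and $\alpha^\dagger=[X,L]=\alpha^*$. Feeding $\lambda=0$ into the $k=2$ line of \ref{dX} annihilates every $dA^\dagger_t$, $dA_t$ and $d\Lambda_t$ contribution, since each carries a factor of $\lambda$, leaving only the single drift term
\begin{align*}
dj_t(X)^2 &= j_t(\alpha\alpha^\dagger)\,dt .
\end{align*}
Thus the problem collapses to evaluating the constant operator $\alpha\alpha^\dagger$ and then integrating its flow.

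First I would compute $\alpha$ as a $2\times2$ matrix. Because $L=\ell\,\mathbb{I}$ with $\ell=-i\sigma_x\partial_x-i\sigma_\epsilon\partial_\epsilon$ self-adjoint, and using the canonical relations $[\partial_x,x]=[\partial_\epsilon,\epsilon]=1$ (with $\partial_x$ commuting with $\epsilon$ and $\partial_\epsilon$ with $x$), each entry of $\alpha=[L^*,X]$ reduces to a \emph{constant} scalar: the diagonal entries become $-i\sigma_x\mp i\cos(2\theta)\sigma_\epsilon/2$ and the off-diagonal entries become $\pm i\sin(2\theta)\sigma_\epsilon/2$. Writing $\alpha=iM$ with $M$ real symmetric gives $\alpha\alpha^\dagger=\alpha\alpha^*=M^2$, a position-independent $2\times2$ matrix whose diagonal entries are $\sigma_x^2+\sigma_\epsilon^2/4\pm\cos(2\theta)\sigma_x\sigma_\epsilon$ and whose off-diagonal entries are proportional to $\sin(2\theta)\sigma_x\sigma_\epsilon$. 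This squaring step, carried out with matrix-valued commutators, is the one place demanding careful bookkeeping and is where I expect the main obstacle (and any sign subtleties) to sit; the rest is structural.

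Next I would integrate. From $dj_t(X)^2=j_t(\alpha\alpha^\dagger)\,dt$ we obtain $j_t(X)^2=j_0(X)^2+\int_0^t j_s(\alpha\alpha^\dagger)\,ds$, so it remains to evaluate $E^{(\psi,\varepsilon)}[j_s(\alpha\alpha^\dagger)]$. The key observation is that $\alpha\alpha^\dagger=M^2$ is a constant matrix (each entry a scalar), and as such it commutes with $L$, $L^*$ and with the scalar operator $H\propto\mathbb{I}$, i.e.\ with every coefficient appearing in the generator of $U_t$ in \ref{U_QSP} (recalling $\mathbb{I}-S=0$). Hence $\alpha\alpha^\dagger\otimes\mathbb{I}$ commutes with $U_s$, and the flow is static: $j_s(\alpha\alpha^\dagger)=U_s^*(\alpha\alpha^\dagger\otimes\mathbb{I})U_s=\alpha\alpha^\dagger\otimes\mathbb{I}$ for every $s$. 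Taking the $(\psi,\varepsilon)$ expectation gives $E^{(\psi,\varepsilon)}[j_s(\alpha\alpha^\dagger)]=\langle\psi|\alpha\alpha^\dagger|\psi\rangle$, independent of $s$, so that the noise-generated variance is exactly $\langle\psi|M^2|\psi\rangle\,t$.

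Finally I would expand $\langle\psi|M^2|\psi\rangle$ over the direct-sum components $\psi=\psi_o\oplus\psi_b$. Because each entry of $M^2$ is a scalar, the diagonal entries pull out of the inner products to give $\|\psi_o\|^2$ and $\|\psi_b\|^2$ with weights $\sigma_x^2+\sigma_\epsilon^2/4\pm\cos(2\theta)\sigma_x\sigma_\epsilon$, while the off-diagonal entry yields the cross term proportional to $\sin(2\theta)\big(\langle\psi_o|\psi_b\rangle+\langle\psi_b|\psi_o\rangle\big)\sigma_x\sigma_\epsilon$, reproducing the stated formula. The one interpretive point I would flag is that the reported quantity is the variance \emph{accumulated by the stochastic process}, namely the $t$-proportional term from $\int_0^t j_s(\alpha\alpha^\dagger)\,ds$; the static initial contribution $E^\psi[X^2]$ is not part of this noise variance.
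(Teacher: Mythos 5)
Your proposal is correct and follows essentially the same route as the paper: with $S=\mathbb{I}$ the gauge coefficient $\lambda$ vanishes, so $dj_t(X)^2=j_t(\alpha\alpha^{\dagger})\,dt$ with $\alpha\alpha^{\dagger}$ a constant matrix, and the $t$-linear variance is the quadratic form $\langle\psi|\alpha\alpha^{\dagger}|\psi\rangle\,t$ expanded over $\psi_o\oplus\psi_b$; your commutation argument showing $j_s(\alpha\alpha^{\dagger})=\alpha\alpha^{\dagger}\otimes\mathbb{I}$ is simply a more explicit substitute for the paper's exponential-vector evaluation $\langle ce(u)|\big(\int_0^t ds\big)ce(u)\rangle=t$. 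One remark: carrying out the squaring carefully from the operator in \ref{ext_X} gives off-diagonal entries $-\sin(2\theta)\sigma_x\sigma_{\epsilon}$ for $\alpha\alpha^{\dagger}$ (since the off-diagonal equals $b(a+d)$ with $b=-\sin(2\theta)\sigma_{\epsilon}/2$, $a+d=2\sigma_x$), so the sign you leave as ``proportional to'' in fact comes out opposite to the displayed cross term --- but this is an inconsistency already present in the paper's own proof, which asserts $+\sin(2\theta)\sigma_x\sigma_{\epsilon}$ in $dj_t(X)^2$ despite its own expression for $dj_t(X)$.
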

\begin{proof}
We must calculate $\alpha$, and $\theta$. Applying \ref{ext_L} and \ref{ext_X}, we get:
\begin{align*}
[L^*,X]&=\begin{pmatrix} i\sigma_x+i\cos(2\theta)\sigma_{\epsilon}&-i\sin(2\theta)\sigma_{\epsilon}\\-i\sin(2\theta)\sigma_{\epsilon}&i\sigma_x-i\cos(2\theta)\sigma_{\epsilon}\end{pmatrix}\\
XL^*L+L^*LX-2L^*XL &=0
\end{align*}
Inserting this into \ref{dX}, with $S=\mathbb{I}$ we get:
\begin{align}\label{dX_ext}
dj_t(X)&=\begin{pmatrix}i\sigma_x+i\cos(2\theta)\sigma_{\epsilon}/2&-i\sin(2\theta)\sigma_{\epsilon}/2\\-i\sin(2\theta)\sigma_{\epsilon}/2&i\sigma_x-i\cos(2\theta)\sigma_{\epsilon}/2\end{pmatrix}dA^{\dagger}_t\\
&-\begin{pmatrix}i\sigma_x+i\cos(2\theta)\sigma_{\epsilon}/2&-i\sin(2\theta)\sigma_{\epsilon}/2\\-i\sin(2\theta)\sigma_{\epsilon}/2&i\sigma_x-i\cos(2\theta)\sigma_{\epsilon}/2\end{pmatrix}dA_t\nonumber\\
dj_t(X)^2&=\begin{pmatrix}\sigma_x^2+\sigma_{\epsilon}^2/4+\cos(2\theta)\sigma_x\sigma_{\epsilon}&\sin(2\theta)\sigma_x\sigma_{\epsilon}\\ \sin(2\theta)\sigma_x\sigma_{\epsilon}&\sigma_x^2+\sigma_{\epsilon}^2/4-\cos(2\theta)\sigma_x\sigma_{\epsilon}\end{pmatrix}dt\nonumber
\end{align}
The stochastic integrals for $j_t(X)$, and $j_t(X)^2$ can be defined using definition 4.1 in \cite{HP}:
\begin{align*}
j_t(X)&=\int_0^t\begin{pmatrix}i\sigma_x+i\cos(2\theta)\sigma_{\epsilon}/2&-i\sin(2\theta)\sigma_{\epsilon}/2\\-i\sin(2\theta)\sigma_{\epsilon}/2&i\sigma_x+i\cos(2\theta)\sigma_{\epsilon}/2\end{pmatrix}dA^{\dagger}_s\\
&-\int_0^t\begin{pmatrix}i\sigma_x+i\cos(2\theta)\sigma_{\epsilon}/2&-i\sin(2\theta)\sigma_{\epsilon}/2\\-i\sin(2\theta)\sigma_{\epsilon}/2&i\sigma_x+i\cos(2\theta)\sigma_{\epsilon}/2\end{pmatrix}dA_s\\
j_t(X)^2&=\int_0^t\begin{pmatrix}\sigma_x^2+\sigma_{\epsilon}^2/4+\cos(2\theta)\sigma_x\sigma_{\epsilon}&\sin(2\theta)\sigma_x\sigma_{\epsilon}\\\sin(2\theta)\sigma_x\sigma_{\epsilon}&\sigma_x^2+\sigma_{\epsilon}^2/4-\cos(2\theta)\sigma_x\sigma_{\epsilon}\end{pmatrix}ds
\end{align*}
The operators $dA_t$ and $dA^{\dagger}_t$ act on exponential vectors in the symmetric Fock space: $\Gamma(L^2(\mathbb{R}^+;\mathbb{C}))$. For example:
\begin{align*}
e(u) &=\bigoplus_{n=0}^{\infty}(n!)^{-1/2}u^{\otimes n}\\
u &\in L^2(\mathbb{R}^+;\mathbb{C})
\end{align*}
We write the Hilbert space vector for the market as:
\begin{align*}
|\psi\otimes ce(u)\rangle
\end{align*}
Where $c$ is a normalising constant:
\begin{align*}
c^2&=\frac{1}{\langle e(u)|e(u)\rangle}
\end{align*}
Therefore we can write:
\begin{align*}
E^{(\psi,e(u))}[j_t(X)] &=c^2\langle\big(\psi\otimes e(u)\big)|j_t(X)\big(\psi\otimes e(u)\big)\rangle\\
E^{(\psi,e(u))}[j_t(X)^2] &=c^2\langle\big(\psi\otimes e(u)\big)|j_t(X)^2\big(\psi\otimes e(u)\big)\rangle\\
\psi &=\begin{pmatrix}\psi_o\\ \psi_b\end{pmatrix}
\end{align*}
In fact, we have (see for example \cite{HP}):
\begin{align*}
\Big\langle ce(u)\Big|\Big(\int_0^tds\Big)ce(u)\Big\rangle &=c^2\int_0^t1ds\langle e(u)|e(u)\rangle\\
&=t
\end{align*}
Furthermore, we have:
\begin{align*}
E^{(\psi,e(u))}[j_t(X)]&=X_0
\end{align*}
So, conditional on $X_0=0$, we have:
\begin{align*}
Var(X)&=E^{(\psi,\varepsilon)}[j_t(X)^2]\nonumber\\
&=\begin{pmatrix}\psi_0&\psi_b\end{pmatrix}\begin{pmatrix}\sigma_x^2+\frac{\sigma_{\epsilon}^2}{4}+\cos(2\theta)\sigma_x\sigma_{\epsilon}&\sin(2\theta)\sigma_x\sigma_{\epsilon}\\\sin(2\theta)\sigma_x\sigma_{\epsilon}&\sigma_x^2+\frac{\sigma_{\epsilon}^2}{4}-\cos(2\theta)\sigma_x\sigma_{\epsilon}\end{pmatrix}\begin{pmatrix}\psi_o\\\psi_b\end{pmatrix}t\\
&=\big(\sigma_x^2+(\sigma_{\epsilon}^2/4)\big)t+\cos(2\theta)\sigma_x\sigma_{\epsilon}\big(||\psi_o||^2-||\psi_b||^2\big)t\\
&+\sin(2\theta)\big(\langle\psi_b|\psi_o\rangle+\langle\psi_o|\psi_b\rangle\big)\sigma_x\sigma_{\epsilon}t
\end{align*}
\end{proof}
\begin{remark}
Importantly, due to the particular choices for the $L$ and $X$ operators (equations \ref{ext_L} and \ref{ext_X}), we end in a situation where it is no longer possible to define $V$ such that:
\begin{align*}
\frac{\partial V}{\partial t}+\frac{\partial V}{\partial x}j_t(\theta)+\frac{\partial^2 V}{\partial x^2}\frac{j_t(\alpha\alpha^{\dagger})}{2}
\end{align*}
is the zero operator. Crucially, we need to take account of the quantum state to solve:
\begin{align*}
E^{(\psi,\varepsilon)}\bigg[\frac{\partial V}{\partial t}+\frac{\partial V}{\partial x}j_t(\theta)+\frac{\partial^2 V}{\partial x^2}\frac{j_t(\alpha\alpha^{\dagger})}{2}\bigg]=0
\end{align*}
\end{remark}
\subsubsection{Financial Interpretation of the Extended Quantum Approach:}\label{state_dep_vol}
First consider the situation, whereby $\psi_o,\psi_b$ are given by \ref{bivar}. In other words, there is an even balance of market buyers and sellers. In this case we have:
\begin{align*}
\langle\psi_o|\psi_b\rangle &=\langle\psi_b|\psi_o\rangle\\
&=||\psi_o||^2=||\psi_b||^2=1/2
\end{align*}
Therefore, from equation \ref{ext_var1}, we get:
\begin{align*}
Var(X)&=(\sigma_x^2+\sigma_{\epsilon}^2)t+\sin(2\theta)\sigma_x\sigma_{\epsilon}
\end{align*}
Now assume we enter a bear market sentiment, ie sellers being willing to sell at a lower than normal price. We can see from  equation \ref{bear_theta}, that this situation is represented by $\theta>0$, and we have an increase in the volatility.

Similarly, if we enter a bull market sentiment, with buyers being willing to pay a higher price, then $\theta<0$, and we have a reduction in the volatility. In other words, the model has a natural skew, similar to that observed in most implied volatility surfaces.
\section{A Non-Gaussian Model For Bid-Offer Spread:}\label{QC_chapter_BO}
In this section, we show how the incorporation of $R(\theta)$ into the unitary time evolution operator defined by \ref{U_QSP}, leads to a non-Gaussian process. We go on to discuss the financial interpretation of doing so, in the next section. First, we set $S=R(\pi/2)$, in equation \ref{U_QSP}: 
\begin{align*}
\lambda&=S^*XS-X\\
&=\begin{pmatrix} x+\cos(\pi)\epsilon/2 & -\sin(\pi)\epsilon/2\\-\sin(\pi)\epsilon/2 & x-\cos(\pi)\epsilon/2\end{pmatrix}-\begin{pmatrix} x+\epsilon/2 & 0\\0&x-\epsilon/2\end{pmatrix}\\
&=\begin{pmatrix} x-\epsilon/2&0\\0&x+\epsilon/2\end{pmatrix}-\begin{pmatrix} x+\epsilon/2 & 0\\0&x-\epsilon/2\end{pmatrix}\\
&=\begin{pmatrix} -\epsilon&0\\0&\epsilon\end{pmatrix}
\end{align*}
In section \ref{first}, we incorporated derivatives with respect to both the mid-price \& bid-offer spread variables. This enabled the development of a model whereby the variance of the mid-price depended on the market quantum state.

In this section, we simplify the stochastic process, so that only the mid-price evolves stochastically. However, we show how by incorporating the market rotation from buyers (represented by: $\psi_b$) to sellers (represented by $\psi_0$) into equation \ref{U_QSP}, we generate a non-Gaussian model. With this in mind, we change the $L$ operator to:
\begin{align*}
L\begin{pmatrix} \psi_o\\ \psi_b\end{pmatrix} &=\begin{pmatrix} -i\sigma\partial_x&0\\0& -i\sigma\partial_x\end{pmatrix}\begin{pmatrix} \psi_o\\ \psi_b\end{pmatrix}
\end{align*}
Inserting this into equation \ref{dX}, we get the following stochastic process for $j_t(X)$, $j_t(X)^k$:
\begin{align}\label{dX_S}
dj_t(X) &=\begin{pmatrix}0&i\sigma\\-i\sigma&0\end{pmatrix}dA_t+\begin{pmatrix}0&i\sigma\\-i\sigma&0\end{pmatrix}dA^{\dagger}_t+j_t\begin{pmatrix}-\epsilon & 0\\0&\epsilon\end{pmatrix}d\Lambda_t\\
dj_t(X)^k &=\Bigg(\begin{pmatrix}\sigma^2&0\\0&\sigma^2\end{pmatrix}j_t\begin{pmatrix}-\epsilon & 0\\0&\epsilon\end{pmatrix}^{k-2}\Bigg)dt\nonumber\\
&+\Bigg(\begin{pmatrix}0&i\sigma\\-i\sigma&0\end{pmatrix}j_t\begin{pmatrix}-\epsilon & 0\\0&\epsilon\end{pmatrix}^{k-1}\Bigg)dA_t\nonumber\\
&+\Bigg(j_t\begin{pmatrix}-\epsilon & 0\\0&\epsilon\end{pmatrix}^{k-1}\begin{pmatrix}0&i\sigma\\-i\sigma&0\end{pmatrix}\Bigg)dA^{\dagger}_t\nonumber\\
&+\Bigg(j_t\begin{pmatrix}-\epsilon & 0\\0&\epsilon\end{pmatrix}\Bigg)^kd\Lambda_t\nonumber
\end{align}
\subsection{Deriving The Price Process:}
As above, we model the derivative price as an operator valued function:
\begin{align*}
V:\mathcal{L}(\mathcal{H}\otimes\Gamma)\times\mathbb{R}^+\rightarrow \mathcal{L}(\mathcal{H}\otimes\Gamma)
\end{align*}
The goal is to solve for a derivative price $V(j_t(X),t)$ such that:
\begin{align*}
E^{(\psi,\varepsilon)}[V(j_t(X),t)] &=\langle(\psi\otimes\varepsilon)|V(j_t(X),t)|(\psi\otimes\varepsilon)\rangle\\
&=E^{(\psi,\varepsilon)}[V(j_0(X),0)]\nonumber\\
&=V_0
\end{align*}
\begin{proposition}
Let $V(j_t(X),t)$ represent a Martingale Price Process for a derivative payout. Let $U_t$ be the time evolution operator defined by \ref{U_QSP} with $X$ defined by equation \ref{ext_X}, $L$ defined by equation \ref{ext_L}, $S=R(\pi/2)$, and $H=0$. Then we have:
\begin{align}
\frac{\partial V}{\partial t}&+\frac{\sigma^2}{2}\frac{\partial^2 V}{\partial x^2}+\sigma^2\sum_{k\geq 2}\frac{\epsilon^{(2k-2)}}{(2k)!}\frac{\partial^{2k}V}{\partial x^{2k}}\\
&+\sigma^2(||\psi_b||^2-||\psi_o||^2)\sum_{k\geq 2}\frac{(-\epsilon)^{(2k-3)}}{(2k-1)!}\frac{\partial^{(2k-1)}V}{\partial x^{(2k-1)}}=0\nonumber
\end{align}
\end{proposition}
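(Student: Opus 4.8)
The plan is to follow the route of Proposition \ref{main_QBS}, expanding $dV$ as the power series \ref{dV_powerseries}
\begin{align*}
dV=\sum_{n,k}\frac{1}{n!\,k!}\frac{\partial^{(n+k)}V}{\partial x^n\partial t^k}(dj_t(X))^n(dt)^k,
\end{align*}
but now retaining \emph{every} order $n$ of $(dj_t(X))^n$, since the scattering choice $S=R(\pi/2)$ makes $\lambda=S^*XS-X$ non-zero. First I would record that, with $H=0$ and the symmetric combination $L^*LX+XL^*L-2L^*XL$ vanishing, the drift $\theta=0$; hence the $n=1$ term contributes no $dt$ part, and the only time-direction terms come from $n\geq2$, where the closed form \ref{dX} gives $j_t(\alpha\lambda^{n-2}\alpha^\dagger)$ as the coefficient of $dt$ in $(dj_t(X))^n$ (together with $\tfrac{\partial V}{\partial t}\,dt$ from the $n=0,\,k=1$ term).

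The central calculation is the matrix $\alpha\lambda^{n-2}\alpha^\dagger$. From \ref{dX_S} we have $\alpha=\alpha^\dagger=\bigl(\begin{smallmatrix}0&i\sigma\\-i\sigma&0\end{smallmatrix}\bigr)$ and $\lambda=\bigl(\begin{smallmatrix}-\epsilon&0\\0&\epsilon\end{smallmatrix}\bigr)$, so $\lambda^{n-2}=\bigl(\begin{smallmatrix}(-\epsilon)^{n-2}&0\\0&\epsilon^{n-2}\end{smallmatrix}\bigr)$. The key algebraic observation is that conjugating a diagonal matrix by the purely off-diagonal $\alpha$ interchanges its diagonal entries, so that
\begin{align*}
\alpha\lambda^{n-2}\alpha^\dagger=\begin{pmatrix}\sigma^2\epsilon^{n-2}&0\\0&\sigma^2(-\epsilon)^{n-2}\end{pmatrix}.
\end{align*}
Taking the expectation over $\mathcal{H}$ (with $\epsilon$ treated as a parameter and the $x$-derivatives of $V$ acting as a common scalar multiplier on both components) weights the top entry by $||\psi_o||^2$ and the bottom by $||\psi_b||^2$, giving the scalar $\sigma^2\epsilon^{n-2}\bigl(||\psi_o||^2+(-1)^n||\psi_b||^2\bigr)$.

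To assemble the equation I would impose the Martingale condition of Definition \ref{Mart_Price_Process}, $E^{(\psi,\varepsilon)}[dV]=0$; evaluated against the vacuum, the $dA_t$, $dA^\dagger_t$ and $d\Lambda_t$ terms vanish in expectation, leaving
\begin{align*}
\frac{\partial V}{\partial t}+\sigma^2\sum_{n\geq2}\frac{\epsilon^{n-2}}{n!}\bigl(||\psi_o||^2+(-1)^n||\psi_b||^2\bigr)\frac{\partial^n V}{\partial x^n}=0.
\end{align*}
Splitting the sum by parity finishes the argument: for even $n=2k$ the factor $(-1)^n=1$ collapses the weight to $||\psi_o||^2+||\psi_b||^2=1$ by the normalisation \ref{norm_cond}, producing the $\tfrac{\sigma^2}{2}\partial_x^2V$ term ($k=1$) and the even tail $\sigma^2\sum_{k\geq2}\tfrac{\epsilon^{2k-2}}{(2k)!}\partial_x^{2k}V$; for odd $n=2k-1$ the factor $(-1)^n=-1$ leaves the weight $||\psi_o||^2-||\psi_b||^2$, and the identity $\epsilon^{2k-3}(||\psi_o||^2-||\psi_b||^2)=(-\epsilon)^{2k-3}(||\psi_b||^2-||\psi_o||^2)$ recasts the odd tail into exactly the stated form.

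I expect the difficulty to be bookkeeping rather than conceptual, concentrated in three places. The sign and ordering in $\alpha\lambda^{n-2}\alpha^\dagger$ must be tracked carefully: it is precisely the entry-interchange under conjugation by $\alpha$ that flips $\lambda^{n-2}$ and supplies the correct sign in the odd-order terms, and one should derive this coefficient from \ref{dX} directly rather than from the displayed $dt$-term of \ref{dX_S}, whose diagonal entries appear in the opposite order. Next, because $\lambda\neq0$ the series is genuinely infinite --- every $n\geq2$ contributes --- so, unlike the Gaussian case, one must record the smoothness/analyticity hypotheses on $V$ (through the functional calculus \ref{SpectralTheory}) that justify interchanging the infinite sum with the expectation, and the assumption that $V$ depends only on the mid-price coordinate so that $\partial_x^nV$ is common to the two components. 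Finally, the $d\Lambda_t$ term has no analogue in Propositions \ref{main_QBS}--\ref{arb_just}; one must confirm it drops out under the expectation, which holds for the vacuum since $\langle e(0)|d\Lambda_t|e(0)\rangle=0$.
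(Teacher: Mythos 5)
Your proof is correct and follows essentially the same route as the paper: expand $dV$ as the power series \ref{dV_powerseries}, insert the quantum It\^{o} coefficients for $dj_t(X)^n$, and take (vacuum) expectations so that only the $dt$ terms survive, then split the resulting sum by parity. In fact your derivation is more internally consistent than the paper's own: the $dt$-coefficient displayed in \ref{dX_S}, and hence the intermediate expectation step of the paper's proof, has the diagonal entries $(-\epsilon)^{k-2}$ and $\epsilon^{k-2}$ in the swapped order (which would flip the sign of the odd-order terms relative to the stated conclusion), whereas your computation of $\alpha\lambda^{n-2}\alpha^{\dagger}$ directly from \ref{dX} --- with the entry interchange under conjugation by the off-diagonal $\alpha$ --- is the one that actually yields the proposition's signs, and your explicit remark that the noise terms (including $d\Lambda_t$) vanish only against the vacuum makes precise a step the paper leaves implicit.
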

\begin{proof}
Expanding incremental changes in $V$ as a power series we get:
\begin{align*}
dV &\coloneqq V(j_t(X)+dj_t(X),t+dt)-V(j_t(X),t)\\
 &= \sum_{n,k}\frac{1}{n!k!}\frac{\partial^{(n+k)}V}{\partial x^n\partial t^k}(dj_t(X)^n)(dt^k)
\end{align*}
We then insert from \ref{dX_S} to get:
\begin{align}\label{dV_S}
dV &=\bigg(\frac{\partial V}{\partial t}+\sum_{k\geq 2}\frac{1}{k!}\frac{\partial^k V}{\partial x^k}\begin{pmatrix}\sigma^2&0\\0&\sigma^2\end{pmatrix}\begin{pmatrix}-\epsilon & 0\\0&\epsilon\end{pmatrix}^{k-2}\bigg)dt\\
&+\bigg(\sum_{k\geq 1}\frac{\partial^k V}{\partial x^k}\begin{pmatrix}0&i\sigma\\-i\sigma&0\end{pmatrix}\begin{pmatrix}-\epsilon & 0\\0&\epsilon\end{pmatrix}^{k-1}\bigg)dA_t\nonumber\\
&-\bigg(\sum_{k\geq 1}\frac{\partial^k V}{\partial x^k}\begin{pmatrix}-\epsilon & 0\\0&\epsilon\end{pmatrix}^{k-1}\begin{pmatrix}0&i\sigma&0\\-i\sigma&0\end{pmatrix}\bigg)dA^{\dagger}_t\nonumber\\
&+\bigg(\sum_{k\geq 1}\frac{\partial^k V}{\partial x^k}\begin{pmatrix}0&i\sigma\\-i\sigma&0\end{pmatrix}\begin{pmatrix}-\epsilon & 0\\0&\epsilon\end{pmatrix}^{k}\begin{pmatrix}0&i\sigma\\-i\sigma&0\end{pmatrix}\bigg)d\Lambda_t\nonumber
\end{align}
Taking expectations and equating to zero, we find that:
\begin{align*}
E^{(\psi,\varepsilon)}\bigg[\frac{\partial V}{\partial t}+\sum_{k\geq 2}\frac{1}{k!}\frac{\partial^k V}{\partial x^k}\begin{pmatrix}\sigma^2&0\\0&\sigma^2\end{pmatrix}\begin{pmatrix}(-\epsilon)^{k-2}&0\\0&\epsilon^{k-2}\end{pmatrix}\bigg]&=0
\end{align*}
Setting:
\begin{align*}
|\psi\rangle&=\begin{pmatrix}\psi_0\\ \psi_b\end{pmatrix}
\end{align*}
We get:
\begin{align}\label{NG_QBS}
\frac{\partial V}{\partial t}&+\frac{\sigma^2}{2}\frac{\partial^2 V}{\partial x^2}+\sigma^2\sum_{k\geq 2}\frac{\epsilon^{(2k-2)}}{(2k)!}\frac{\partial^{2k}V}{\partial x^{2k}}\\
&+\sigma^2(||\psi_b||^2-||\psi_o||^2)\sum_{k\geq 2}\frac{(-\epsilon)^{(2k-3)}}{(2k-1)!}\frac{\partial^{(2k-1)}V}{\partial x^{(2k-1)}}=0\nonumber
\end{align}
\end{proof}
\subsection{Associated Fokker-Planck Equation:}
In this section, we calculate the Fokker-Planck equation associated to the partial differential equation \ref{NG_QBS}. We follow the basic strategy outlined in \cite{Oksendal}. See also \cite{Hicks} proposition 3.1.

First of all note that, since: $E^{(\psi,\varepsilon)}[dV]=0$, we have that:
\begin{align*}
V_0=E^{(\psi,\varepsilon)}[V(j_T(X),T]
\end{align*}
If we are working with a derivative that pays out at final maturity $T$, then $V_(j_T(X),T)$ is defined by the contractual payout:
\begin{align*}
V_T(j_T(X),T)&=\chi(j_T(X))
\end{align*}
Thus we have:
\begin{align*}
V_0=E^{(\psi,\varepsilon)}[\chi(j_T(X)]
\end{align*}
For this reason, one can associate to equation \ref{NG_QBS} a Fokker-Planck equation for the underlying probability function.
\begin{proposition}\label{NG_FP_prop}
The Fokker-Planck equation associated to the partial differential equation \ref{NG_QBS} is given by:
\begin{align}\label{NG_FP}
\frac{\partial p}{\partial t}=\frac{\sigma^2}{2}\frac{\partial^2 p}{\partial x^2}+\sigma^2\sum_{k\geq 2}\frac{\epsilon^{(2k-2)}}{(2k)!}\frac{\partial^{2k} p}{\partial x^{2k}}+\sigma^2\big(||\psi_o||^2-||\psi_b||^2\big)\sum_{k\geq 2}\frac{(-\epsilon)^{(2k-3)}}{(2k-1)!}\frac{\partial^{(2k-1)} p}{\partial x^{(2k-1)}}
\end{align}
\end{proposition}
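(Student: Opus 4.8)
The plan is to exploit the standard duality between the backward (Kolmogorov) equation satisfied by $V$ and the forward (Fokker--Planck) equation satisfied by the density $p$, obtained by passing to the formal $L^2$-adjoint of the spatial operator; this is precisely the strategy of \cite{Oksendal} and \cite{Hicks} proposition 3.1. Since $E^{(\psi,\varepsilon)}[dV]=0$, the expectation $V_0=E^{(\psi,\varepsilon)}[V(j_t(X),t)]$ is constant in $t$, and the first step is to represent this expectation as an integral of $V(x,t)$ against the marginal probability density $p(x,t)$ of the underlying, so that
\begin{align*}
V_0=\int_{\mathbb{R}}V(x,t)\,p(x,t)\,dx\qquad\text{for every }t.
\end{align*}
Writing $\mathcal{A}$ for the spatial differential operator appearing in \ref{NG_QBS}, the backward equation reads $\partial_t V+\mathcal{A}V=0$, where
\begin{align*}
\mathcal{A}=\frac{\sigma^2}{2}\frac{\partial^2}{\partial x^2}+\sigma^2\sum_{k\geq 2}\frac{\epsilon^{(2k-2)}}{(2k)!}\frac{\partial^{2k}}{\partial x^{2k}}+\sigma^2\big(||\psi_b||^2-||\psi_o||^2\big)\sum_{k\geq 2}\frac{(-\epsilon)^{(2k-3)}}{(2k-1)!}\frac{\partial^{(2k-1)}}{\partial x^{(2k-1)}}.
\end{align*}

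Next I would differentiate the conservation identity in $t$ and substitute $\partial_t V=-\mathcal{A}V$ from the backward equation, obtaining
\begin{align*}
0=\int_{\mathbb{R}}\Big(V\,\partial_t p-(\mathcal{A}V)\,p\Big)\,dx.
\end{align*}
Repeated integration by parts then transfers $\mathcal{A}$ onto $p$ as its formal adjoint $\mathcal{A}^*$, assuming all boundary contributions vanish. Because every coefficient in $\mathcal{A}$ is constant in $x$, the adjoint acts term by term through $(\partial_x^n)^*=(-1)^n\partial_x^n$: even-order derivatives are unchanged, while each odd-order derivative acquires a factor $-1$, which converts the prefactor $||\psi_b||^2-||\psi_o||^2$ into $||\psi_o||^2-||\psi_b||^2$. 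This leaves $\int_{\mathbb{R}}V\big(\partial_t p-\mathcal{A}^* p\big)\,dx=0$, and since the identity holds for an arbitrary (smooth, decaying) terminal payoff $\chi$ and hence for a sufficiently rich class of functions $V$, I would conclude $\partial_t p=\mathcal{A}^* p$, which is exactly equation \ref{NG_FP}.

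The main obstacle is the rigorous justification of the term-by-term integration by parts: because $\mathcal{A}$ contains derivatives of \emph{every} even and odd order, one must control an infinite family of boundary terms and legitimately interchange summation with integration. I expect this to require $p(\cdot,t)$ together with all its derivatives to decay at $\pm\infty$, and the series defining $\mathcal{A}p$ to converge --- most cleanly handled by working in a Schwartz-type class and invoking the rapid decay supplied by the leading $\tfrac{\sigma^2}{2}\partial_x^2$ term. Once this analytic framework is secured, the remaining content --- computing $\mathcal{A}^*$ and reading off the sign reversal on the odd-order terms --- is purely formal and yields \ref{NG_FP} directly.
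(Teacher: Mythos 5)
Your proposal is correct in substance and reaches the paper's equation by the same essential mechanism --- transferring the spatial operator onto the density by integration by parts, with $(\partial_x^n)^*=(-1)^n\partial_x^n$ leaving the even-order terms alone and flipping the sign of the odd-order terms, hence converting $||\psi_b||^2-||\psi_o||^2$ into $||\psi_o||^2-||\psi_b||^2$ --- but your route into that computation differs from the paper's. The paper never invokes the backward PDE \ref{NG_QBS} in its proof: instead it applies the quantum Ito expansion \ref{dX_S} directly to the terminal payoff, expanding $d\chi=\chi(j_t(X)+dj_t(X))-\chi(j_t(X))$ as a power series, taking Fock-space and system-state expectations so that only the $dt$ terms survive (a Dynkin-type identity), integrating from $0$ to $T$, and then differentiating with respect to the maturity $T$; the arbitrariness of the test function is then immediate, because the object being expanded is $\chi$ itself. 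You instead use \ref{NG_QBS} as the backward equation satisfied by $V$, together with the conservation identity $V_0=\int V\,p\,dx$. That is a legitimate, more purely PDE-level argument, but it carries one subtlety the paper's route avoids: at a fixed $t<T$ you need the family $\{V(\cdot,t):\chi\ \text{arbitrary}\}$ --- the image of the backward propagator --- to be rich enough to separate functions, which your phrase ``hence for a sufficiently rich class of functions $V$'' asserts without argument. The clean fix is to evaluate your identity at $t=T$, where $V(\cdot,T)=\chi$ is genuinely arbitrary, and then let $T$ itself vary over $\mathbb{R}^+$; this is in effect exactly what the paper's differentiation in $T$ accomplishes. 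As for the analytic caveat you flag, the paper treats the infinite-order operator just as formally as you propose: it truncates the sum at $k=N$, integrates by parts $2N$ times, and takes $N\rightarrow\infty$, so your suggestion of a Schwartz-type framework is a more careful statement of the same step rather than a departure from it.
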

\begin{proof}
Applying the Spectral Theorem, we have (for some probability density function $p(y,t)$):
\begin{align}\label{expchi}
V_0&=E^{(\psi,\varepsilon)}[V(j_T(X),T)]\nonumber\\
&=\int_{\mathbb{R}}\chi(y)p(y,T)dy
\end{align}
First expand the function:
\begin{align*}
d\chi &\coloneqq \chi(j_t(X)+dj_t(X))-\chi(j_t(X))
\end{align*}
as a power series in $dj_t(X)$:
\begin{align}\label{chi_powerseries}
d\chi &=\sum_{k}\frac{1}{k!}\frac{\partial^k\chi}{\partial x^k}dj_t(X)^k
\end{align}
After taking expectations over the symmetric Fock space, we find (for some unknown probability density function $p(x,t)$):
\begin{align*}
E^{(\psi,\varepsilon)}[d\chi]&=\int_{\mathbb{R}}E^{\psi}\big[d\chi(y)\big]p(y,t)dy
\end{align*}
Now, applying equation \ref{dX_S} to equation \ref{chi_powerseries}, then integrating from $0$ to $T$, we get:
\begin{align}\label{interim_int}
E^{(\psi,\varepsilon)}[\chi(j_T(X))]&=\chi_0\nonumber\\
&+\int_0^T\int_{\mathbb{R}}E^{\psi}\bigg[\sum_{k\geq 2}\frac{1}{k!}\frac{\partial^k\chi}{\partial x^k}\begin{pmatrix}\sigma^2&0\\0&\sigma^2\end{pmatrix}\begin{pmatrix}-\epsilon & 0\\0&\epsilon\end{pmatrix}^{k-2}\bigg]p(y,t)dydt
\end{align}
So we have:
\begin{align*}
E^{(\psi,\varepsilon)}[\chi(j_T(X))]&=\chi_0\\
&+\int_0^T\int_{\mathbb{R}}\bigg(\sigma^2\sum_{k\geq 2}\frac{\epsilon^{(2k-2)}}{(2k)!}\frac{\partial^{2k}\chi}{\partial x^{2k}}\\
&+\sigma^2\big(||\psi_b||^2-||\psi_0||^2\big)\sum_{k\geq 2}\frac{(-\epsilon)^{(2k-3)}}{(2k-1)!}\frac{\partial^{(2k-1)}\chi}{\partial x^{(2k-1)}}\bigg)p(t,y)dydt
\end{align*}
Inserting from equation \ref{expchi}, we get:
\begin{align*}
\int_{\mathbb{R}}\chi(y)p(y,T)dy&=\int_0^T\int_{\mathbb{R}}\bigg(\sigma^2\sum_{k\geq 2}\frac{\epsilon^{(2k-2)}}{(2k)!}\frac{\partial^{2k}\chi}{\partial x^{2k}}\\
&+\sigma^2\big(||\psi_b||^2-||\psi_0||^2\big)\sum_{k\geq 2}\frac{(-\epsilon)^{(2k-3)}}{(2k-1)!}\frac{\partial^{(2k-1)}\chi}{\partial x^{(2k-1)}}\bigg)p(t,y)dydt
\end{align*}
Differentiating with respect to $T$:
\begin{align*}
\int_{\mathbb{R}}\chi(y)\frac{\partial p(y,T)}{\partial T}dy&=\int_{\mathbb{R}}\bigg(\sigma^2\sum_{k\geq 2}\frac{\epsilon^{(2k-2)}}{(2k)!}\frac{\partial^{2k}\chi}{\partial x^{2k}}\\
&+\sigma^2\big(||\psi_b||^2-||\psi_0||^2\big)\sum_{k\geq 2}\frac{(-\epsilon)^{(2k-3)}}{(2k-1)!}\frac{\partial^{(2k-1)}\chi}{\partial x^{(2k-1)}}\bigg)p(T,y)dy
\end{align*}
If we truncate the right hand side at $k=N$, then the result follows by integrating by parts $2N$ times. Then taking the limit $N\rightarrow\infty$, gives us equation \ref{NG_FP}.
\end{proof}
\subsection{Moment Generating Function:}
Taking the Fourier transform of \ref{NG_FP}, and applying \cite{Dettman} Theorem 8.4.4, we get the following result:
\begin{proposition}\label{moments_prop}
The kth moment of the probability density function described by equation \ref{NG_FP} is given by:
\begin{align}\label{moments}
\mu_k&=\sum_n\frac{k!}{\#OP^n(k)!}\prod_{j\in OP^n(k)}\sigma^2ta_j\\
OP^n(k)&=\text{nth ordered partition of k, not including 1}\nonumber\\
\#OP^n(k)&=\text{number of elements in the nth ordered partition, not including 1}\nonumber
\end{align}
In particular we have for the skew \& kurtosis:
\begin{itemize}
\item $\mu_3=\sigma^2t\epsilon(||\psi_b||^2-||\psi_o||^2)$
\item $\mu_4=3(\sigma^2t)^2+\sigma^2t\epsilon^2$
\end{itemize}
\end{proposition}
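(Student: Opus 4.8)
The plan is to pass from the infinite-order Fokker--Planck equation \ref{NG_FP} to an ordinary differential equation for the transform of $p$, solve that ODE explicitly, and then read the moments off the resulting exponential. Taking the spatial Fourier transform $\hat{p}(\omega,t)=\int_{\mathbb{R}}e^{i\omega x}p(x,t)\,dx$ (equivalently, working with the moment generating function $M(s,t)=\int_{\mathbb{R}}e^{sx}p(x,t)\,dx$, which avoids the spurious factors of $i$), each derivative $\partial_x^n$ becomes multiplication by a power of the transform variable, so that the whole right hand side of \ref{NG_FP} collapses into multiplication by a single symbol. In the $M$-variable this reads
\begin{align*}
\frac{\partial M}{\partial t}=G(s)M,\qquad G(s)=\frac{\sigma^2}{2}s^2+\sigma^2\sum_{k\geq 2}\frac{\epsilon^{2k-2}}{(2k)!}s^{2k}+\sigma^2\big(||\psi_o||^2-||\psi_b||^2\big)\sum_{k\geq 2}\frac{(-\epsilon)^{2k-3}}{(2k-1)!}s^{2k-1}.
\end{align*}
Here $s$ is merely a parameter, so this is a first-order linear ODE in $t$; with the point-mass initial datum $p(x,0)=\delta(x)$, i.e.\ $M(s,0)=1$, it integrates at once to $M(s,t)=e^{G(s)t}$. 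By \cite{Dettman} Theorem 8.4.4 the moments are recovered from this transform as $\mu_k=\partial_s^k M(s,t)|_{s=0}$, so the problem reduces entirely to extracting the Taylor coefficients of the exponential $e^{G(s)t}$.

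The second step is purely combinatorial. I would write $G(s)t=\sum_{j\geq 2}\sigma^2 t\,a_j\,s^j$, where $\sigma^2 a_j$ is the coefficient of $s^j$ in $G(s)$ (so $a_2=1/2$, $a_{2k}=\epsilon^{2k-2}/(2k)!$, and $a_{2k-1}=(||\psi_o||^2-||\psi_b||^2)(-\epsilon)^{2k-3}/(2k-1)!$ for $k\geq 2$), the key point being that the lowest surviving power is $s^2$ because there is no drift or constant term. Expanding $e^{G t}=\sum_{n}(Gt)^n/n!$ and collecting the coefficient of $s^k$, each monomial of $(Gt)^n$ comes from an ordered choice $(j_1,\dots,j_n)$ with $j_1+\cdots+j_n=k$ and every $j_i\geq 2$; this is exactly an ordered partition of $k$ with no part equal to $1$, contributing $\prod_i \sigma^2 t\,a_{j_i}$, while the $1/n!$ supplied by the exponential is precisely the $1/\#OP^n(k)!$ appearing in \ref{moments}. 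Multiplying through by the $k!$ from $\mu_k=\partial_s^k M(s,t)|_{s=0}$ then reproduces
\begin{align*}
\mu_k=\sum_n\frac{k!}{\#OP^n(k)!}\prod_{j\in OP^n(k)}\sigma^2 t\,a_j.
\end{align*}

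Finally I would confirm the low-order cases by enumerating the admissible partitions. For $k=3$ the only ordered partition with no part $1$ is $(3)$, giving $\mu_3=3!\,\sigma^2 t\,a_3=\sigma^2 t\,\epsilon(||\psi_b||^2-||\psi_o||^2)$; for $k=4$ the admissible partitions are $(4)$ and $(2,2)$, giving $\mu_4=4!\,\sigma^2 t\,a_4+\tfrac{4!}{2!}(\sigma^2 t\,a_2)^2=\sigma^2 t\,\epsilon^2+3(\sigma^2 t)^2$. The principal obstacle I anticipate is analytic rather than algebraic: one must justify transforming the infinite series in \ref{NG_FP} term by term, interchanging the time integration with that infinite sum when solving the ODE, and verifying that $e^{G(s)t}$ is genuinely the transform of a probability density so that differentiation at $s=0$ legitimately returns the moments. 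Once these interchanges are granted, the passage from the $s^k$-coefficient of $e^{G(s)t}$ to the sum over ordered partitions is just the exponential formula, and the result follows.
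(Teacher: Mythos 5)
Your proposal follows essentially the same route as the paper's own proof: transform equation \ref{NG_FP} in the spatial variable so that the infinite-order operator becomes a symbol $G$, solve the resulting first-order ODE in $t$ to obtain an exponential, identify the $s^k$-coefficient of $e^{G(s)t}$ with a sum over ordered partitions of $k$ having no part equal to $1$ (the $1/n!$ from the exponential series matching $1/\#OP^n(k)!$), and confirm the cases $k=3,4$ by enumerating $\{3\}$, $\{4\}$ and $\{2,2\}$. The only differences are cosmetic: you work with the moment generating function directly where the paper takes the Fourier transform, obtains $\mathcal{F}(p)=\exp\big(t\sum_{k\geq 2}a_k(iz)^k\big)$, and then passes to $M_p(z)=\exp\big(t\sum_{k\geq 2}a_kz^k\big)$; and you are more explicit about the initial condition $M(s,0)=1$ and the analytic interchanges being assumed.

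There is, however, one step where your argument (and, in equivalent form, the paper's) is not tight, and it is worth flagging precisely because you claim the MGF route "avoids the spurious factors of $i$". The transform of a derivative carries a parity sign: integrating by parts,
\begin{align*}
\int_{\mathbb{R}}e^{sx}\,\partial_x^n p(x,t)\,dx=(-s)^n M(s,t),
\end{align*}
not $s^nM(s,t)$. This is invisible for the even-order terms but flips every odd-order term, so the odd part of your $G(s)$ should appear with the opposite sign; carried through, the skew comes out as $\mu_3=\sigma^2t\epsilon\big(||\psi_o||^2-||\psi_b||^2\big)$, the negative of what the proposition states. (A direct check confirms this: multiplying \ref{NG_FP} by $y^3$ and integrating by parts, the $\partial_y^3$ term with coefficient $\sigma^2(||\psi_o||^2-||\psi_b||^2)(-\epsilon)/3!$ contributes $(-1)^3\,3!$ times that coefficient.) The paper's proof commits the same slip at the step $\mathcal{F}(p)\mapsto M_p$, since the substitution eliminating the $i$'s introduces a factor $(-1)^k$ in the $k$th term, which is dropped. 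So your proposal faithfully reproduces the paper's argument, including this defect: the stated sign of $\mu_3$ agrees with the paper's financial interpretation (an excess of sellers producing negative skew) but not with equation \ref{NG_FP} as printed, while the even moments, in particular $\mu_4$, are unaffected.
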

\begin{proof}
Taking the Fourier transform of \ref{NG_FP}, we get:
\begin{align*}
\frac{\mathcal{F}(p)}{\partial t}&=\sum_{k\geq 2}a_k(iz)^k\mathcal{F}(p)\\
a_k&=\begin{cases} \frac{\sigma^2\epsilon^{k-2}}{k!}\text{, for k even}\\
\frac{\sigma^2(||\psi_b||^2-||\psi_o||^2)\epsilon^{k-2}}{k!}\text{, for k odd}\end{cases}
\end{align*}
Thus, we get:
\begin{align*}
\mathcal{F}(p)=exp\Big(t\sum_{k\geq 2}a_k(iz)^k\Big)
\end{align*}
Therefore, we can write the Moment Generating function for $p$ as:
\begin{align*}
M_p(z)&=exp\Big(t\sum_{k\geq 2}a_kz^k\Big)\\
&=1+\sum_{k\geq 2}z^k\sum_n\prod_{j\in OP^n(k)}\frac{a_jt}{\#OP^n(k)!}\\
OP^n(k)&=\text{nth ordered partition of k, not including 1}\\
\#OP^n(k)&=\text{number of elements in the nth ordered partition, not including 1}
\end{align*}
\end{proof}
For example, the only partition of 3, not including 1, is $\{3\}$. Therefore, the coefficient of $z^3$ in \ref{moments} is given by:
\begin{align*}
a_3tz^3&=\frac{\sigma^2t\epsilon(||\psi_b||^2-||\psi_o||^2)z^3}{6}
\end{align*}
For 4, we have 2 ordered partitions that do not include 1: $\{4\}$ and $\{2,2\}$. Therefore, the coefficient of $z^4$ in \ref{moments} is given by:
\begin{align*}
\frac{(a_2t)^2}{2!}z^4+a_4z^4&=\frac{(\sigma^2t)^2z^4}{8}+\frac{(\sigma^2t)\epsilon^2z^4}{24}
\end{align*}
The result then follows from the definition of the Moment generating function.
\subsection{Financial Interpretation of the Non-Gaussian Quantum Approach:}
By applying the rotation $S=R(\pi/2)$ in the unitary operator that introduces the stochastic noise, we are essentially representing the case that a buyer that makes up the market state: $\psi_b$, decides to pay the best offer price. Alternatively a seller that makes the market state: $\psi_o$ decides to sell at the best bid price. Indeed in order for a transaction to occur, a buyer \& seller must meet in the middle.

Thus we have via the application: $X\rightarrow S^*XS$, sellers being willing to sell at the lower price currently bid, and buyers willing to pay the higher price currently offered:
\begin{align*}
\begin{pmatrix} x+\epsilon/2 & 0\\0&x-\epsilon/2\end{pmatrix}&\rightarrow\begin{pmatrix} x-\epsilon/2 & 0\\0&x+\epsilon/2\end{pmatrix}
\end{align*}
In terms of the stochastic process: equation \ref{dX_S}, in addition to the usual Gaussian evolution of the mid-price:
\begin{align*}
dj_t(X) &=\begin{pmatrix}0&i\sigma\\-i\sigma&0\end{pmatrix}dA_t+\begin{pmatrix}0&i\sigma\\-i\sigma&0\end{pmatrix}dA^{\dagger}_t
\end{align*}
We allow for a small number of market participants to cross the bid-offer spread via the additional terms:
\begin{align*}
+j_t\begin{pmatrix}-\epsilon & 0\\0&\epsilon\end{pmatrix}d\Lambda_t
\end{align*}
From the form of the associated Fokker-Planck equation: \ref{NG_FP} and associated moments: equation \ref{moments}, we first note that where the bid-offer spread: $\epsilon=0$, the statistical moments reduce to those from the standard Gaussian model. In this case, the standard Fokker-Planck equation is a good approximation. Under these financial circumstances, the market liquidity is such that there is a single unique price at which both buyers \& sellers can transact.

As the bid-offer spread, $\epsilon$, grows so does the excess kurtosis given by: $(\sigma^2t)\epsilon^2$. Note that as one looks further and further into the future, the kurtosis will tend to the Gaussian, in the sense that we have:
\begin{align*}
\lim_{t\rightarrow\infty} \frac{(\sigma^2t)\epsilon^2+3(\sigma^2t)^2}{3(\sigma^2t)^2}&=1
\end{align*}
Note also, that where there are an even balance of buyers \& sellers, we have $||\psi_o||^2=||\psi_b||^2$, and the skew is zero. In this case, non-zero negative skew is introduced by an excess of sellers: $||\psi_o||^2>||\psi_b||^2$, in combination with a non-zero bid-offer spread: $\epsilon$.

%
%
\end{document}